\def\th@plain{%
  \thm@notefont{}
  \itshape 
}
\def\th@definition{%
  \thm@notefont{}
  \normalfont 
}
\newcommand\xlrsquigarrow{%
  \mathrel{%
    \vcenter{%
      \hbox{%
        \begin{tikzpicture}
          \path[
            draw,
            >={Implies[]},
            <->,
            double distance between line centers=1.5pt,
            decorate,
            decoration={
              zigzag,
              amplitude=0.7pt,
              segment length=3pt,
              pre length=4pt,
              post length=4pt,
            },
          ]   
            (0,0) -- (14pt,0);
        \end{tikzpicture}%
      }%
    }%
  }%
}
\let\Alpha=A
\let\Beta=B
\let\Epsilon=E
\let\Zeta=Z
\let\Eta=H
\let\Iota=I
\let\Kappa=K
\let\Mu=M
\let\Nu=N
\let\Omicron=O
\let\omicron=o
\let\Rho=P
\let\Tau=T
\let\Chi=X
\newtheorem{theorem}{Theorem}
\newtheorem{corollary}[theorem]{Corollary}
\newtheorem{lemma}[theorem]{Lemma}
\theoremstyle{definition} 
\newtheorem{definition}{Definition}
\definecolor{ltblue}{rgb}{0,0.4,0.4}
\definecolor{dkblue}{rgb}{0,0.1,0.6}
\definecolor{dkgreen}{rgb}{0,0.35,0}
\definecolor{dkviolet}{rgb}{0.3,0,0.5}
\definecolor{dkred}{rgb}{0.5,0,0}
\NewDocumentCommand{\optionalParens}{s m m}{
    \IfBooleanTF{#2}{(#3)}{\IfBooleanTF{#1}{~#3}{#3}}
}
\NewDocumentCommand{\apply}{m s O{} m}{
    #1#3 \optionalParens*{#2}{#4}
}
\NewDocumentCommand{\applytwo}{m O{} s m s m}{ 
   #1#2~\optionalParens{#3}{#4}~\optionalParens{#5}{#6}
}
\newcommand*{\coloruwave}[1]{%
 \bgroup
 \markoverwith{\lower3.5\p@\hbox{\sixly\textcolor{#1}{\char58}}}%
 \ULon}
\NewDraftFormat{\awk}{\coloruwave{red}}
\newcommand{\xRightarrow}[2][]{\ext@arrow 0359\Rightarrowfill@{#1}{#2}}
\NewDocumentCommand{\async}{m m m m O{} m}{\langle #1, #2 \rangle \vdash #3 \downarrow #4 \mapsto^{#5} #6}
\newcommand{\mg}[3]{#1 \vdash #2 \downarrow #3}
\newcommand{\synceval}[3]{\coqmath{sync}(#1,#2,#3)}
\NewDocumentCommand{\synceven}{mmo}{
  \coqmath{sync_even}(\coqmath{#1},#2 \IfValueT{#3}{,#3})
  }
\NewDocumentCommand{\syncodd}{mmo}{
  \coqmath{sync_odd}(\coqmath{#1},#2 \IfValueT{#3}{,#3})
  }
\newcommand{\nextstate}[3]{\coqmath{next_state}(#1,#2,#3)}
\newcommand{\tempty}{\coqmath{t_empty}}
\newcommand{\tcons}[2]{#2 \triangleright #1}
\newcommand{\fire}[3]{\coqmath{fire}(#1,#2,#3)}
\newcommand{\transparent}[2]{\coqmath{latch_transparency}(#1,#2)}
\newcommand{\numevents}{\applytwo{\coqmath{num_events}}}
\newcommand{\SRC}{\textcolor{gray}{\scriptsize{\textsc{SRC}}}}
\newcommand{\SNK}{\textcolor{gray}{\scriptsize{\textsc{SNK}}}}
\title{Formal Verification of Flow Equivalence in Desynchronized Designs}
\author{
\IEEEauthorblockN{Jennifer Paykin,\IEEEauthorrefmark{1} Brian Huffman,\IEEEauthorrefmark{1} Daniel M. Zimmerman,\IEEEauthorrefmark{1}
Peter A. Beerel\IEEEauthorrefmark{1}\IEEEauthorrefmark{2}
}
\IEEEauthorblockA{\IEEEauthorrefmark{1}Galois, Inc,
Portland, OR, USA \\
\IEEEauthorrefmark{2}Ming Hsieh Department of Electrical and Computer Engineering, USC, 
Los Angeles, CA, USA \\
\texttt{\{jpaykin,huffman,dmz\}@galois.com}, \texttt{pabeerel@usc.edu} }
}
\begin{document}

\maketitle

\begin{abstract}
Seminal work by \citeauthor*{Cortadella2006} includes a hand-written proof that a particular handshaking protocol preserves \emph{flow equivalence}, a notion of equivalence between synchronous latch-based specifications and their desynchronized bundled-data asynchronous implementations. In this work we identify a counterexample to \citeauthor{Cortadella2006}'s proof illustrating how their protocol can in fact lead to a violation of flow equivalence. However, two of the less concurrent protocols identified in their paper do preserve flow equivalence. To verify this fact, we formalize flow equivalence in the Coq proof assistant and provide mechanized, machine-checkable proofs of our results.
\end{abstract}

\section{Introduction}

\emph{Desynchronization} is a popular strategy for designing bundled-data (BD) asynchronous latch-based designs from a synchronous RTL specification. In a desynchronized design, master/slave latches are driven by local clocks controlled by specific classes of asynchronous controllers~\cite{Cortadella2006}. 
Because BD latch-based designs are susceptible to subtle timing assumptions, it is important to ensure that desynchronization preserves the functional behavior of the circuit. \citet{LeGuernic2003} propose \emph{flow equivalence} as a property characterizing when two circuits, either synchronous or asynchronous, have the same functional behavior. \citet{Cortadella2006} provide a hand-written proof that a highly concurrent desynchronization controller preserves flow equivalence, and use that result to claim that two less concurrent controllers, called \emph{rise-decoupled} and \emph{fall-decoupled}, also preserve flow equivalence.

We identify a subtle flaw in \citeauthor{Cortadella2006}'s proof and provide a counterexample showing that the proposed desynchronization controller fails to guarantee flow equivalence. Subtle errors in hand-written proofs are easy to overlook, as evidenced by the wide acceptance of \citeauthor{Cortadella2006}'s results.

To ensure that our results are correct, we formalize flow equivalence in the higher-order interactive theorem prover Coq~\cite{coq2019}, which codifies high-level mathematics (such as induction, case analysis, functions, and relations) in a machine-checkable system. In particular, we show that the core ideas of \citeauthor{Cortadella2006}'s work are sound by adapting their flow equivalence proof to the less concurrent fall- and rise-decoupled controllers.

This work illustrates the benefits of applying formal methods to asynchronous design. 
Other formal techniques in the literature range from verification of hazard-freedom (e.g., \cite{Nelson07}) and deadlock-freedom (e.g., \cite{Verbeek2013}), 
to more general notions of equivalence 
between gate level implementations, handshaking-level specifications \cite{Cunning04,Park2016,Longfield13,Sakib2019}, and abstract asynchronous communication primitives~\cite{Saifhash2015}. 
Several researchers focus on verifying general properties of asynchronous designs using model checkers \cite{Bui12,Borrione03, Bouzafour2018} and proof assistants, including ACL2~\cite{Peng2019,Chau2019}.

This paper makes several key contributions. \cref{sec:counter} describes a counterexample to \citeauthor{Cortadella2006}'s hand-written proof in the form of a timing diagram that satisfies the desynchronization protocol but violates flow equivalence, motivating our formal analysis. \cref{sec:back} formalizes the definition of flow equivalence as well as the marked graph specification of the proposed controllers, and \cref{sec:proof} presents Coq proofs that the rise-decoupled and fall-decoupled controllers guarantee flow equivalence.\footnotemark~ Finally, \cref{sec:summary} describes some directions for future research, including extending the formalization to \citeauthor{Cortadella2006}'s liveness results.

\footnotetext{All Coq definitions and proofs can be found in our formalization available at \texttt{\url{https://github.com/GaloisInc/Coq-Flow-Equivalence}}.}

\section{Counterexample}
\label{sec:counter}

\citeauthor{Cortadella2006}'s desynchronization protocol~\citep{Cortadella2006} defines a set of requirements for asynchronous controllers of a latch-based circuit. In this section we demonstrate that the desynchronization protocol fails to preserve flow equivalence in general; we give a circuit and a set of clock traces that is valid according to the protocol, but produces different outputs than the synchronous version of the same circuit.


Consider a three-stage linear pipeline with latches $A$, $B$, and $C$. Latches $A$ and $C$ are \emph{even}, and have valid initial values $a_0$ and $c_0$; latch $B$ is \emph{odd}, and is uninitialized. When driven synchronously, values for cycle $n$ are produced for odd latches first, followed by even latches. For $n > 0$ we have $b_n = F_B(a_{n-1})$ and $c_n = F_C(b_{n})$, where $F_B$ and $F_C$ represent the combinational logic before each latch (\cref{fig:synchronous}).

\begin{figure}
    \centering
    \includegraphics[scale=0.85]{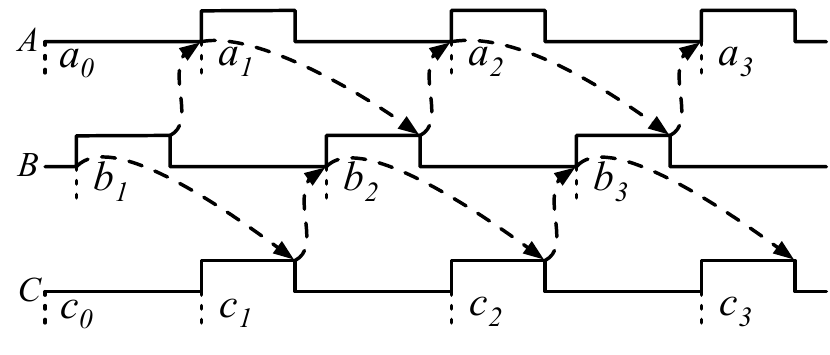}
    \caption{Three-stage pipeline with synchronous timing. Arrows indicate timing constraints of the desynchronized protocol.}
    \label{fig:synchronous}
\end{figure}

The dashed arrows in \cref{fig:synchronous} indicate the timing constraints of the desynchronized protocol, derived from \citeauthor{Cortadella2006}'s marked graph specification; asynchronous controllers are allowed to shift the clock transitions arbitrarily forward and backward in time, as long as those constraints are preserved.\footnotemark~ We construct a counterexample by delaying the first transitions for latch $A$ as long as possible (\cref{fig:counterexample}).

\footnotetext{\Cref{fig:synchronous} and \citeauthor{Cortadella2006}'s proof both assume that all latches start opaque, contrary to the convention used later in this paper (e.g., \cref{fig:protocols}), where odd latches start transparent. It is straightforward to show that these initial states are mutually reachable from each other in the marked graph protocols.}

\begin{figure}
    \centering
    \includegraphics[scale=0.85]{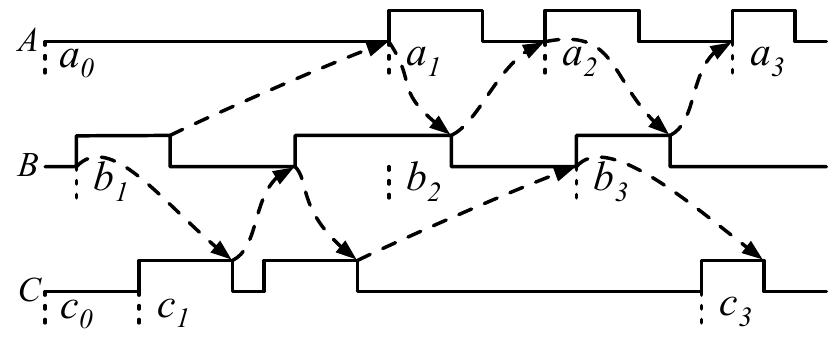}
    \caption{Asynchronous circuit timing that obeys the desynchronized protocol, but violates flow equivalence.}
    \label{fig:counterexample}
\end{figure}

Two circuits are flow-equivalent if the sequences of values stored at their corresponding latches are identical. The trace of values on latch $C$ in \cref{fig:counterexample} violates flow equivalence: instead of $[c_0,c_1,c_2,c_3]$, the asynchronous circuit latches $[c_0,c_1,c_1,c_3]$. Crucially, the second clock pulse of $C$ ends before the first pulse of $A$ begins. At this point $C$ should have latched $c_2$, but $c_2$ has a transitive data dependency on $a_1$, which is not yet available.

The error in \citeauthor{Cortadella2006}'s proof arises from a subtle misuse of its induction hypothesis. The proof states that, immediately before the second fall of an even latch's clock ($C$), its predecessor's clock ($B$) must have risen twice, and therefore the induction hypothesis implies that $B$ has the value $b_2$. This is not guaranteed; the induction hypothesis only applies after $B-$ occurs. It appears that the proof assumes that $B$'s new value is available as soon as $B+$ occurs, which, as our counterexample shows, is not necessarily the case.\footnotemark

\footnotetext{In fact, the invariant that a new value is always available as soon as a latch becomes transparent \emph{is} true for the less concurrent fall-decoupled controller, but not for the rise-decoupled or desynchronization controllers, as detailed in \cref{sec:proof}.}

The proof contains a similar statement about even predecessors of odd latches, and it is easy to construct a counterexample to illustrate that case. More complex configurations, including cycles, also give rise to counterexamples.

\section{Formalizing Flow Equivalence}
\label{sec:back}

In this section we formalize the definition of flow equivalence in the Coq proof assistant in a way that is amenable to verification later on.
Each Coq definition automatically generates an induction principle reflecting the definition's structure, so different definitions can make proofs easier or harder. Though there are many possible definitions of flow equivalence, the one we present here yields concise and elegant proofs.

Coq is a dependently-typed functional programming language and interactive proof assistant that has been used  both to prove significant mathematical theorems~\cite{Gonthier2008} and to verify the correctness of realistic software systems~\cite{Leroy2012}. Proof engineers interactively write proof scripts to prove theorems in Coq, from which the system produces a proof term that the Coq kernel can check for validity. Because the kernel is small and trustworthy, the proof scripts used to generate proof terms can be arbitrarily complex and the user can still have confidence that their proof is correct.

We develop the definition of flow equivalence in several stages, illustrating Coq syntax along the way. The resulting formalization could almost certainly be replicated in other  provers such as ACL2, Isabelle, or $F^\ast$, but Coq's interactivity, scriptability, and rich inductive definitions were particularly useful in this development.

\subsection{Synchronous execution semantics}

We first define the synchronous execution of a circuit. 
Like \citeauthor{Cortadella2006}, we model circuits with master-slave latches obtained by decoupling D flip-flops; every latch is either \emph{even} (master) or \emph{odd} (slave), and neighboring latches must have opposite parities.

Let \coq{even} be a set of even latches and let \coq{odd} be a set of odd latches. We define the type \coq{latch} of all latches to be made up of both even and odd latches---mathematically, the disjoint union of \coq{odd} and \coq{even}.
\begin{Coq}
Inductive latch := Odd : odd -> latch
                 | Even : even -> latch.
\end{Coq}
In the Coq code, the keyword \coq{Inductive} indicates that \coq{latch} is defined by its constructors, \coq{Odd} and \coq{Even}. \coq{Odd} is a function of type \coq{odd -> latch}, so takes an argument drawn from the set \coq{odd}, and produces a \coq{latch}. 

A \emph{state} relative to a type $A$ is a function from $A$ to values, which are either numbers or undefined.
\begin{Coq}
Inductive value := Num : nat -> value | X  : value.
Definition state (A : Type) := A -> value.
\end{Coq}

\begin{definition}[Circuits]
A \emph{circuit} consists of lists of neighboring even-odd and odd-even latches, as well as, for each latch, a function that computes its value from the values of its left neighbors. Like \citeauthor{Cortadella2006}, we model only closed circuits and assume that combinational blocks and latches have zero delay. Any open circuit can be combined with a model of the environment to obtain a closed circuit.
\begin{Coq}
Record circuit :=
  { even_odd_neighbors : list (even * odd)
  ; odd_even_neighbors : list (odd * even)
  ; next_state_e (E : even) :
    state {O : odd & In (O,E) odd_even_neighbors} -> value
  ; next_state_o (O : odd) :
    state {E : even & In (E,O) even_odd_neighbors} -> value}.
\end{Coq}
\end{definition}
 The \coq{Record} syntax defines a data type with a collection of functions out of that data type. The \coq{circuit} record has four such functions: \coq{eo_nbrs} and \coq{oe_nbrs} are functions from a \coq{circuit} to lists of its even-odd and odd-even neighbors, respectively; \coq{next_e} is a function from a \coq{circuit}, an \coq{even} latch $E$, and a \coq{state} of the \coq{odd} left neighbors of $E$ to the next state of $E$; \coq{next_o} is similar to \coq{next_e}, but for \coq{odd} latches.
 
 In the \coq{Record}, the notation \coq|{x : A & P(x)}| restricts the type \coq{A} to those elements satisfying the predicate \coq{P}. For example, the \coq{next_e} function takes as input a state restricted to the odd latches \coq{O} satisfying the predicate \coq{In (O, E) oe_nbrs}, where \coq{In a ls} indicates that the element \coq{a} occurs in the list \coq{ls}.

When a latch $l$ can be either even or odd, we write $\nextstate{c}{st}{l}$ for the corresponding next-state function. 

\begin{definition}[Synchronous Execution]
The \emph{synchronous execution} of a circuit $c$ is a function from the initial state $st_0$ of the circuit to the state at its $n$th clock cycle. In this paper, odd latches update first each clock cycle, followed by even latches.\footnote{Without loss of generality, this ordering could be reversed so that even latches update first; the initial markings of the protocols in \cref{fig:protocols} would need to be updated to reflect that convention. } That is:\footnote{Note that this definition is well-founded: the values of even latches at time $n>0$ depend on the values of their odd predecessors at time $n$, which themselves depend on the values of their even predecessors at time $n-1$.}
\begin{align*}
    &\synceval{\coqmath{c}}{st_0}{n}(l) = \\
    &
        \begin{cases}
            st_0(l) &\text{if}~n = 0 \wedge l ~\text{is even} \\
            X &\text{if}~n=0 \wedge l~\text{is odd} \\
            \nextstate{\coqmath{c}}{\synceval{\coqmath{c}}{st_0}{n}}{l} &\text{if}~n > 0 \wedge l~\text{is even} \\
            \nextstate{\coqmath{c}}{\synceval{\coqmath{c}}{st_0}{n-1}}{l} &\text{if}~ n > 0 \wedge l~\text{is odd}
        \end{cases}
\end{align*}

\end{definition}

\subsection{Asynchronous execution semantics}

Next, we describe how such a circuit can be executed \emph{asynchronously} by replacing the shared clock with a series of local clocks controlled by bundled data controllers. This section describes the asynchronous execution semantics with respect to sequences of rising and falling transitions of the local clocks. The allowable sequences are constrained by controller specifications described in \cref{sec:mg}.  

\begin{definition}[Events and Traces]
An \emph{event} $e$ is the rise (written $l+$) or fall (written $l-$) of a latch $l$. A \emph{trace} is a list of events---rises and falls of a latch's clock. When a latch's clock rises, the latch becomes transparent; when it falls, the latch becomes opaque.
\begin{Coq}
Inductive event := Rise : latch -> event
                 | Fall : latch -> event.
Definition trace := tail_list event.
\end{Coq}
\end{definition}
A \coq{tail_list} is either empty, denoted $\tempty$, or is an element $a$ appended onto another \coq{tail_list} $t$, denoted $\tcons{a}{t}$.

\begin{definition}[Transparency]
A \emph{transparency} characterizes whether a latch is transparent or opaque at a particular point.
\begin{Coq}
Inductive transparency := Transparent | Opaque.
\end{Coq}
The function \coq{latch_transparency} computes the transparency of a latch $l$ after executing a trace $t$.
\begin{align*} &\transparent{t}{l}= \\
  &\begin{cases}
    \coqmath{Transparent} &\text{if}~ t=\tempty \wedge l~\text{is odd} \\
    \coqmath{Opaque} &\text{if}~t=\tempty \wedge l~\text{is even} \\
    \coqmath{Transparent} &\text{if}~t = \tcons{l+}{t'} \\
    \coqmath{Opaque} &\text{if}~t = \tcons{l-}{t'} \\
    \transparent{t'}{l} &\text{if}~t = \tcons{e}{t'}~\text{otherwise}
  \end{cases}
\end{align*}
\end{definition}

\begin{definition}[Asynchronous Execution]
\label{def:async-execution}
  Given an initial state $st_0$, the \emph{asynchronous execution} of a trace $t$ on a circuit $c$ is defined by a 5-ary relation, for which we introduce the syntax ${\async{c}{st_0}{t}{l}{v}}$. Here, $l$ is a latch and $v$ is the value of $l$ after executing $t$. When this relation holds of a 5-tuple, we say that \emph{$l$ evaluates to $v$ by means of $t$}:
    \begin{itemize}[leftmargin=*]
    
      \item If $l$ is transparent in $t$ and for all left neighbors $l'$ of $l$ it is the case that $\async{c}{st_0}{t}{l'}{st(l')}$, then
     $
        \async{c}{st_0}{t}{l}{\nextstate{c}{st}{l}}.
     $
     
      \item If $l$ is opaque in $t$ and $t$ is the empty list, then ${\async{c}{st_0}{\tempty}{l}{st_0(l)}}$.
      
    \item If $l$ is opaque in $t$ and $t = \tcons{e}{t'}$ such that $e \neq l-$, then $\async{c}{st_0}{t'}{l}{v}$ implies $\async{c}{st_0}{\tcons{e}{t'}}{l}{v}$.
      
     \item Finally, if  $t =\tcons{l-}{t'}$ and for all left neighbors $l'$ of $l$ it is the case that $\async{c}{st_0}{t'}{l'}{st(l')}$, then
     $
       \async{c}{st_0}{\tcons{l-}{t'}}{l}{\nextstate{c}{st}{l}}
     $.

  \end{itemize}
\end{definition}

\Cref{def:async-execution} has four cases, depending on both the transparency of $l$ in $t$ and the structure of $t$. This choice of decomposition gives rise to a strong induction principle based on those four cases, which we will use in \cref{sec:proof} to prove flow equivalence. 

\Cref{fig:async} shows the Coq definition of this relation, extended with an additional parameter $O$ corresponding to the transparency of $l$ in $t$. The additional parameter makes the Coq relation easier to work with, as we often need to consider only opaque or only transparent latches. 
  Note that function application in Coq is written as the function name (e.g. \coq{latch_transparency}) next to its arguments (e.g. \coq{t}, \coq{l}) with whitespace separators. 
  
   Each constructor in the Coq definition is a logical formula that indicates when the relation is satisfied.
For example, when the trace is empty the circuit is in its initial state: even latches are opaque and they have their corresponding initial values. This property is written $\async{c}{st_0}{\tempty}{E}[\coqscriptmath{Opaque}]{st_0(E)}$, and the Coq proof term \coq{async_nil E}  is evidence that it holds.

\begin{figure*}
    \begin{center}
    \begin{minipage}[c]{0.8\linewidth}
    \begin{Coq}
Inductive async (c : circuit) (st0 : state latch)
                : trace -> latch -> transparency -> value -> Prop :=
| async_transparent : forall l t st v, transparent t l = Transparent ->
    (forall l', neighbor c l' l -> ⟨c,st0⟩⊢ t ↓ l' ↦{transparent t l'} st l') ->
    ⟨c,st0⟩⊢ t ↓ l ↦{Transparent} next_state c st l
    
| async_nil : forall E,
    ⟨c,st0⟩ ⊢ t_empty ↓ Even E ↦{Opaque} st0 (Even E)

| async_opaque : forall l e t' v, transparent (t' ▷ e) l = Opaque ->
    e <> Fall l ->
    ⟨c,st0⟩⊢ t' ↓ l ↦{Opaque} v ->
    ⟨c,st0⟩⊢ t' ▷ e ↓ l ↦{Opaque} v

| async_opaque_fall : forall l e t' v st,
    (forall l', neighbor c l' l -> ⟨c,st0⟩⊢ t' ↓ l' ↦{transparent t' l'} st l') ->
    ⟨c,st0⟩⊢ t' ▷ Fall l ↓ l ↦{Opaque} next_state c st l

where "⟨ c , st ⟩⊢ t ↓ l ↦{ O } v" := (async c st t l O v).
    \end{Coq}
    \end{minipage}
    \end{center}
    \vspace{-18pt}
    \caption{Coq definition of asynchronous execution of a trace.}
    \label{fig:async}
\end{figure*}

\begin{figure*}
    \centering
    \begin{subfigure}{0.3\textwidth}
    \begin{center}
        \includegraphics[scale=0.54]{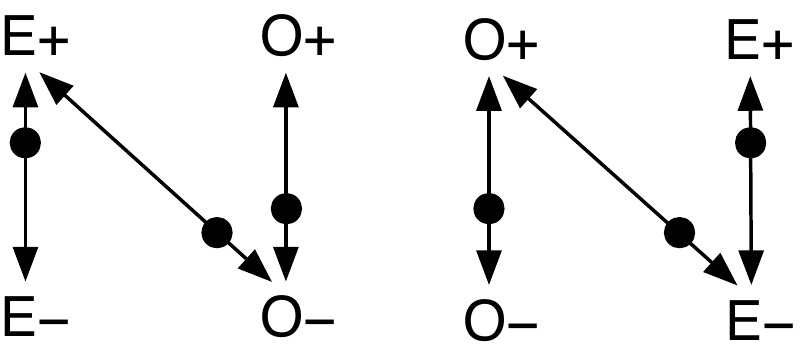}
    \end{center}
    \caption{Desynchronization Protocol}
    \label{fig:desync}
    \end{subfigure}
    \qquad
    \begin{subfigure}{0.3\textwidth}
    \begin{center}
        \includegraphics[scale=0.54]{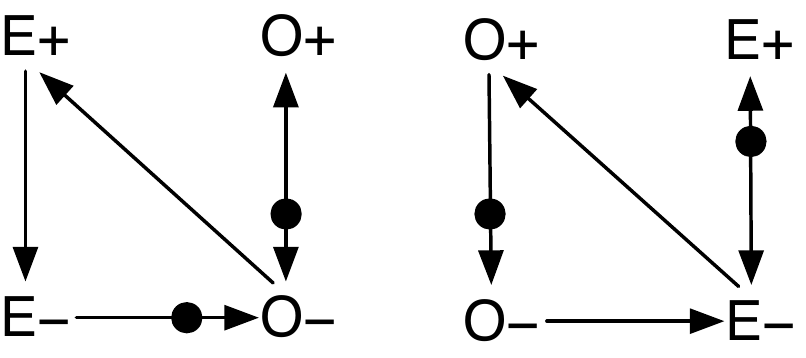}
    \end{center}
    \caption{Rise-Decoupled Protocol}
    \label{fig:rd}
    \end{subfigure}
    \qquad
    \begin{subfigure}{0.3\textwidth}
    \begin{center}
        \includegraphics[scale=0.54]{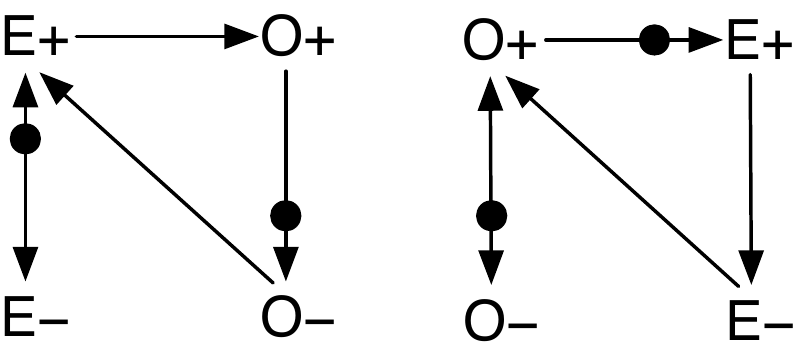}
    \end{center}
    \caption{Fall-Decoupled Protocol}
    \label{fig:fd}
    \end{subfigure}
    \caption{Three marked graph protocols for desynchronization, from \citet{Cortadella2006}. A double arrow $t \leftrightarrow t'$ indicates that there are places in both directions; markings $t \rightarrow t'$ are indicated by dots appearing closer to $t'$ than $t$. For every pair of even-odd (respectively, odd-even) neighbors, the marked graph has the places and initial markings indicated by the diagram on the left (respectively, on the right).}
    \label{fig:protocols}
\end{figure*}

\subsection{Marked graphs}
\label{sec:mg}

\citeauthor{Cortadella2006} characterize the traces allowed by a particular instantiation of controllers as a marked graph, a class of Petri net such that each place has exactly one input and one output transition. 

\begin{definition}[Marked Graphs] Given a set $T$ of transitions, a \emph{marked graph} is a set of places, each with associated input and output transitions, with an initial marking---a function from places to natural numbers. 
\begin{Coq}
Record marked_graph T :=
  { place : T -> T -> Type
  ; init_marking : forall t1 t2, place t1 t2 -> nat }.
Definition marking (M : marked_graph) :=
  forall t1 t2, place M t1 t2 -> nat.
\end{Coq}
\label{def:marked-graphs}
\end{definition}

\Cref{def:marked-graphs} is equivalent to the standard definition of marked graphs as (1) a set of places, (2) a set of transitions, and (3) a flow relation such that each place has unique input and output transitions. However, the type-theoretic \cref{def:marked-graphs} enables easier reasoning in Coq. For example, suppose we have a place $p$ into $t$ in the standard definition. Knowing $t$, there are a finite number of options for $p$, which can be obtained by iterating over the flow relation. \cref{def:marked-graphs} does not require this iteration; Coq automatically identifies the possible options for $p$ by case analysis.

\citeauthor{Cortadella2006} describe marked graph protocols with respect to a circuit's sets of neighboring latches. For each even-odd and odd-even pair they specify the places corresponding to that pair's rises and falls, as illustrated in \cref{fig:protocols}. 
The Coq definition of the desynchronization protocol (\cref{fig:desync}) is shown in \cref{fig:desynchronization-protocol}.

\begin{figure*}
    \begin{center}
    \begin{minipage}[c]{0.55\linewidth}
    \begin{Coq}
Inductive desync_place : event -> event -> Type :=
| fall (l : latch) : desync_place (Rise l) (Fall l)
| rise (l : latch) : desync_place (Fall l) (Rise l)
| rise_fall : forall l l', neighbor c l l' ->
              desync_place (Rise l) (Fall l')
| fall_rise : forall l l', neighbor c l l' ->
              desync_place (Fall l') (Rise l).
Definition desynchronization : marked_graph event :=
  {| place := desync_place
   ; init_marking := fun _ _ p => match p with
     | rise (Even _) => 1
     | fall (Odd _) => 1
     | rise_fall _ _ _ => 1
     | _ => 0
     end |}.
    \end{Coq}
    \end{minipage}
    \end{center}
    \caption{Desynchronization protocol defined in Coq. Given a proof $p$ that the pair $(E,O)$ is in the list of even-odd neighbors, \coqfoot{Even_rise_odd_fall E O p} is a place from $E+$ to $O-$. In the initial marking, the notation \coq{fun x => e} defines a function of type \coq{A -> B} (in this case a \texttt{marking}) by giving a value \coq{e} of type \coq{B} for every argument \coq{x} of type \coq{A}.}
    \label{fig:desynchronization-protocol}
\end{figure*}

\begin{definition}[Enabled Transitions]
A transition is \emph{enabled} in a marking if all of its input places are positive. An enabled transition can \emph{fire} to produce a new marking:
\begin{align*}
    \fire{t}{M}{m}(p) = \begin{cases}
        m(p)-1 &\text{if $p$ leads into $t$} \\
        m(p)+1 &\text{if $p$ leads out of $t$} \\
        m(p) &\text{otherwise}
    \end{cases}
\end{align*}
A marked graph $M$ \emph{evaluates} to the marking $m$ from a tail-list $ls$ of transitions, written $\mg{M}{ls}{m}$, if the following relation holds:
\begin{itemize}
    \item If $ls$ is empty, then ${\mg{M}{ls}{\coqmath{init_marking}(M)}}$.
    \item If $ls=\tcons{t}{ls'}$ such that $\mg{M}{ls'}{m'}$ and $t$ is enabled in $m'$, then $\mg{M}{ls}{\fire{t}{M}{m'}}$.
\end{itemize}
\end{definition}

A crucial property of marked graphs is that they preserve the markings of cycles. We write $p : t_1 \nrightarrow t_2$ for a path spanning from $t_1$ to $t_2$, and $m(p)$ for the sum of the markings of the places along the path.
\begin{lemma} \label{thm:loops}
  If $p$ is a cycle in a marked graph $M$ and $\mg{M}{ls}{m}$, then $m(p)=\coqmath{init_marking}(M)(p)$.
\end{lemma}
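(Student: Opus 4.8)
The plan is to show that the marking sum along $p$ is an invariant of the evaluation relation, by induction on the derivation of $\mg{M}{ls}{m}$ (equivalently, on the structure of $ls$). In the base case $ls = \tempty$ we have $m = \coqmath{init_marking}(M)$, so the goal $m(p) = \coqmath{init_marking}(M)(p)$ holds by reflexivity, both sides being literally the same sum over the places of $p$. In the inductive case $ls = \tcons{t}{ls'}$ we have $\mg{M}{ls'}{m'}$ with $m = \fire{t}{M}{m'}$, and the induction hypothesis gives $m'(p) = \coqmath{init_marking}(M)(p)$; it therefore suffices to prove the single firing-step fact $\fire{t}{M}{m'}(p) = m'(p)$. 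I note up front that the enabledness hypothesis on $t$ plays no role: the cycle sum is preserved purely algebraically, regardless of whether $t$ can legitimately fire, so that hypothesis can simply be discharged.

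The heart of the argument is thus a standalone lemma about firing along a path, which I would prove in the generalized form that covers arbitrary paths rather than only cycles. By induction on the structure of a path $q : t_1 \nrightarrow t_2$, I would show that firing a transition $t$ changes $m(q)$ by exactly $\delta_{t,t_1} - \delta_{t,t_2}$, where $\delta$ is $1$ when its two transitions are equal and $0$ otherwise. For a single place from $t_1$ to $t_2$ this is immediate from the definition of $\fire{t}{M}{m}$: such a place leads out of $t_1$ and into $t_2$, so its marking increases by $1$ exactly when $t = t_1$ and decreases by $1$ exactly when $t = t_2$. When a path from $t_1$ to $t'$ is extended by one further place from $t'$ to $t_2$, the induction hypothesis contributes $\delta_{t,t_1} - \delta_{t,t'}$ and the new place contributes $\delta_{t,t'} - \delta_{t,t_2}$; the two $\delta_{t,t'}$ terms cancel, leaving $\delta_{t,t_1} - \delta_{t,t_2}$, which is the telescoping I want.

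Specializing to a cycle, the start and end transitions coincide, $t_1 = t_2$, so the net change collapses to $\delta_{t,t_1} - \delta_{t,t_1} = 0$ for every transition $t$ — exactly the firing-step fact needed to close the inductive step. The intuition is that firing a transition lying on the cycle removes one token from its incoming cycle-place and deposits it onto the outgoing cycle-place, leaving the total fixed, while firing a transition off the cycle touches no cycle-place at all.

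The main obstacle I anticipate is formalizing the path lemma cleanly, since the telescoping hinges on tracking, for each place of $q$, whether its input transition, its output transition, or neither equals $t$, and on $\fire{t}{M}{m}$ resolving these cases as expected. The one genuinely delicate point is a self-loop place (a single place with $t_1 = t_2$), which simultaneously leads into and out of $t$: here I would verify that the Coq definition of $\fire{}{}{}$ assigns such a place a net change of zero, consistent with $\delta_{t,t_1} - \delta_{t,t_2} = 0$, rather than letting the first matching branch of the case analysis silently decrement it. (In the concrete protocols of \cref{fig:protocols} no place is a self-loop, since every place runs between a rise and a fall, so this case does not actually arise there.) Everything else — the base case, the rearrangement of finite sums, and eliminating the enabledness hypothesis — is routine.
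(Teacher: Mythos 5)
Your strategy coincides with the paper's: induct on the derivation of $\mg{M}{ls}{m}$, dispose of the base case by reflexivity, and reduce the inductive step to a lemma about how a single firing changes the marking sum along a path $p : t_1 \nrightarrow t_2$. Your telescoping formulation with net change $\delta_{t,t_1}-\delta_{t,t_2}$ is exactly the paper's four-case helper lemma (the cases $t=t_1=t_2$, $t=t_1\neq t_2$, $t=t_2\neq t_1$, and $t\notin\{t_1,t_2\}$ giving net changes $0$, $+1$, $-1$, $0$), and specializing to $t_1=t_2$ yields preservation on cycles. The one point where you depart from the paper is in asserting that the enabledness hypothesis ``plays no role'' and can be dropped from the path lemma; that claim is wrong as stated. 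Markings are valued in the natural numbers (\coqmath{init_marking} returns a \coqmath{nat}), so the decrement in $\fire{t}{M}{m}$ is truncated subtraction: if $t$ lies on the cycle but its incoming cycle-place carries zero tokens, firing leaves that place at $0$ while still incrementing the outgoing place, and the cycle sum grows by one instead of staying fixed. Likewise the identity $\fire{t}{M}{m}(p)=m(p)-1$ in the $t=t_2\neq t_1$ case needs the relevant input place to be positive, which is precisely what enabledness supplies. This does not sink your proof of the main lemma, since the evaluation relation only ever fires enabled transitions and the hypothesis is therefore available exactly where you invoke the helper lemma --- but you should retain it in the helper lemma's statement, as the paper does, rather than discharging it.
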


This result follows from the following helper lemma:
\begin{lemma}
  If $t$ is enabled in $M$ and $p : t_1 \nrightarrow t_2$, then:
  \begin{align*}
      t=t_1=t_2 &\Longrightarrow \fire{t}{M}{m}(p)=m(p) \\
      t=t_1\neq t_2 &\Longrightarrow \fire{t}{M}{m}(p)=m(p)+1 \\
      t = t_2 \neq t_1 &\Longrightarrow \fire{t}{M}{m}(p)=m(p)-1 \\
      t \neq t_1 \wedge t \neq t_2 &\Longrightarrow \fire{t}{M}{m}(p)=m(p)
  \end{align*}
\end{lemma}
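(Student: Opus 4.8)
The plan is to induct on the structure of the path $p$, using two elementary facts. First, the path marking is additive under concatenation: if $p$ splits into a leading place $p_0 : t_1 \to t'$ followed by a subpath $p' : t' \nrightarrow t_2$, then $m(p) = m(p_0) + m(p')$. Second, firing changes the marking of a single place $q$ by $-1$ if $q$ is an input place of $t$, by $+1$ if $q$ is an output place of $t$, and by $0$ otherwise, directly from the definition of $\coqmath{fire}$; hence $\fire{t}{M}{m}$ also distributes over concatenation, $\fire{t}{M}{m}(p) = \fire{t}{M}{m}(p_0) + \fire{t}{M}{m}(p')$. Since a place $p_0 : t_1 \to t'$ leads into $t$ exactly when $t' = t$ and out of $t$ exactly when $t_1 = t$, I can read off $\fire{t}{M}{m}(p_0)$ from the two equalities $t \overset{?}{=} t_1$ and $t \overset{?}{=} t'$: it equals $m(p_0)+1$ when $t = t_1 \neq t'$, equals $m(p_0)-1$ when $t = t' \neq t_1$, and equals $m(p_0)$ otherwise.

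For the base case $p$ is the empty path from a transition to itself, so $t_1 = t_2$ and $m(p) = 0$; firing leaves the empty sum unchanged, and only the first clause ($t = t_1 = t_2$) and the fourth ($t \neq t_1 \wedge t \neq t_2$) can apply, each demanding $0 = 0$. In the inductive step I decompose $p = p_0 \cdot p'$ as above and apply the induction hypothesis to $p' : t' \nrightarrow t_2$, then case-split on the position of $t$ relative to $t_1$, $t'$, and $t_2$ and add the contribution of $p_0$ to the contribution of $p'$ supplied by the hypothesis. The decisive cancellation occurs when $t = t'$ is an \emph{interior} vertex of the path: the place $p_0$ feeding into $t$ contributes $-1$, while the hypothesis applied to $p'$ (whose starting transition is now $t$) contributes $+1$ through its $t = \text{start} \neq \text{end}$ clause, so the two $\pm 1$ terms annihilate and the total reduces to $m(p)$, as the fourth clause requires. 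The remaining combinations---where $t$ matches one endpoint of $p$ but is not interior, or matches neither---propagate a single surviving $+1$, $-1$, or $0$ and land in the corresponding clause.

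The main obstacle is the bookkeeping of this case analysis rather than any deep idea: one must track the three independent equalities $t \overset{?}{=} t_1$, $t \overset{?}{=} t'$, $t \overset{?}{=} t_2$ and verify that every pairing of the head-place contribution with the inductively-handled tail sums to the value dictated by the four clauses. Two points deserve care. First, enabledness of $t$ is precisely what makes each decrement a genuine subtraction over $\mathbb{N}$ rather than a truncation at $0$: since every input place of $t$ is positive, any $-1$ landing on the path sum is applied to a place whose marking is at least $1$. Second, the argument presumes that $\coqmath{fire}$ assigns net change $0$ to a self-loop place entering and leaving the same $t$; in the protocols of \cref{fig:protocols} every place connects two distinct events, so this degenerate case never arises and the interior cancellation is unambiguous.
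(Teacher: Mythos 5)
The paper states this helper lemma without any accompanying proof (it appears only as a stepping stone to \cref{thm:loops}), so there is no in-paper argument to compare against; your induction on the structure of the path, with the case split on whether $t$ coincides with $t_1$, the intermediate vertex, and $t_2$, and the cancellation of the $-1$/$+1$ contributions at interior occurrences of $t$, is correct and is the natural argument the Coq development must carry out. Your two caveats are both well taken: enabledness is indeed what keeps the decrement a genuine subtraction over $\mathbb{N}$, and a self-loop place from $t$ to $t$ would make the case split in the definition of $\coqmath{fire}$ ambiguous, so the lemma as stated implicitly assumes (as the protocols of \cref{fig:protocols} satisfy) that no place has equal input and output transitions.
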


\subsection{Flow equivalence}

\citet{LeGuernic2003} define two circuits to be \emph{flow-equivalent} if and only if (1) they have the same set of latches, and (2) for each latch $l$, the sequence of values latched by $l$ is the same in both circuits. It need not be the case that the two circuits have the same state at any specific point in time, but the projection of the states onto each latch should be the same. This ensures that, even if the timing behaviors of the two circuits differ, their functional behaviors are the same.

In this case, we are interested in comparing the same circuit under two different execution models: synchronous and asynchronous. For the synchronous model, the projection of the values latched by a latch $l$ are given by the function $\synceval{c}{st_0}{i}(l)$. For the asynchronous model, we have to consider the execution of every trace allowed by the asynchronous controller. 

\begin{definition}[Flow Equivalence]
A marked graph $M$ ensures \emph{flow equivalence} of a circuit $c$ with initial state $st_0$ if every trace allowable by the marked graph has the same synchronous and asynchronous execution. That is, let $t$ be a trace compatible with $M$, and let $l$ be a latch, opaque in $t$, such that $\async{c}{st_0}{t}{l}{v}$. Then $v$ is the $i$th value of the synchronous execution of $c$, where $i=\numevents*{l-}{t}$, the number of occurrences of $l-$ in $t$.
\begin{Coq}
Definition flow_equivalence (M : marked_graph event)
  (c : circuit) (st0 : state latch) :=
    forall l t v, (exists m, {M}⊢ t ↓ m) ->
      ⟨c,st0⟩⊢ t ↓ l ↦{Opaque} v ->
      v = sync c st0 (num_events (Fall l) t) l.
\end{Coq}
\end{definition}

\section{Verified Flow-Equivalent Protocols}
\label{sec:proof}
In this section we prove in Coq that the rise-decoupled and fall-decoupled protocols illustrated in \cref{fig:protocols} preserve flow equivalence, and revisit the counterexample of \cref{sec:counter} in the context of the formalization.

\subsection{Rise-decoupled protocol}

The rise-decoupled protocol, also called fully-decoupled~\citep{Fuber1996} and illustrated in \cref{fig:rd}, preserves the invariant that when $l$ is latched for the $i$th time, its predecessors have also been latched the appropriate number of times (either $i$ or $i-1$ times); thus, $l$ correctly latches its $i$th synchronous value. \cref{fig:rd-ind-inv} illustrates several possible interleavings of neighboring latches that satisfy the rise-decoupled protocol. Notice that $B$ may not acquire its correct value by the $i$th occurrence of $B+$, but it will by the $i$th occurrence of $B-$.

\begin{figure}
    \centering
    \includegraphics[width=\columnwidth]{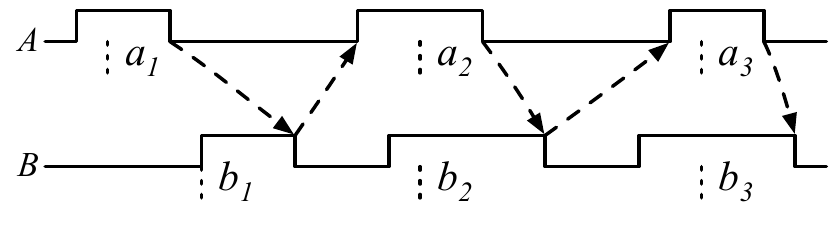}
    \caption{Valid interleavings of the rise-decoupled protocol. Notice that the value $a_i$ is available by the $i$th fall of $B$'s clock.}
    \label{fig:rd-ind-inv}
\end{figure}

Fix a circuit $c$ and an initial state $st_0$. The definition of the \coq{rise_decoupled} marked graph  with respect to $c$ is omitted here for space, but is similar that of the desynchronization protocol shown in \cref{fig:desynchronization-protocol}. In the Coq definition, we add two redundant arcs to ensure that there are always arcs from $l-$ to $l+$ and vice versa. Since there is at most one place between two events $e$ and $e'$, we write $(e \rightarrow e')$ for that place when it exists.

To prove that the rise-decoupled protocol preserves flow equivalence, we first prove several lemmas about the rise-decoupled marked graph.

\begin{lemma} \label{lem:rd-opacity}
  Let $\mg{\coqmath{rise_decoupled}}{t}{m}$. If $l$ has a right neighbor $l'$ such that $m(l- \rightarrow l'-) > 0$, then $l$ is opaque in $t$.
\end{lemma}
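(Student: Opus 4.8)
The plan is to induct on the derivation of $\mg{\coqmath{rise_decoupled}}{t}{m}$, carrying along one structural invariant supplied by \cref{thm:loops}. The crucial observation is that the three places $(l+ \rightarrow l-)$, $(l- \rightarrow l'-)$, and $(l'- \rightarrow l+)$ form a cycle in the rise-decoupled marked graph (traversing the events $l+,\ l-,\ l'-$ and back to $l+$), and that this cycle has initial marking $1$. Hence, by \cref{thm:loops}, every reachable marking $m$ satisfies
\[
  m(l+ \rightarrow l-) + m(l- \rightarrow l'-) + m(l'- \rightarrow l+) = 1 .
\]
Since markings are non-negative, the hypothesis $m(l- \rightarrow l'-) > 0$ immediately caps the other two places at $0$. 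I would establish this invariant first, reading the relevant entries off the definition of $\coqmath{init_marking}$ for \coqmath{rise_decoupled} and matching the cycle against \cref{thm:loops}.

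For the induction, the base case is the empty trace, where $m = \coqmath{init_marking}$; reading off the initial marking (equivalently, using the same loop invariant), the place $(l- \rightarrow l'-)$ is unmarked whenever $l$ is odd, while even latches are opaque on $\tempty$, so the claim holds. In the inductive step $t = \tcons{e}{t'}$, with $e$ enabled in $m'$ and $m = \fire{e}{M}{m'}$, I would case-split on how $e$ acts on $(l- \rightarrow l'-)$ and on the transparency of $l$. If $e = l-$, then $l$ is opaque in $t$ directly from the definition of $\transparent{t}{l}$, so the conclusion holds without even using the hypothesis. If $e$ is any event other than $l+$, $l-$, or $l'-$ (for instance $l'+$ or an event of an unrelated latch), then neither the marking of $(l- \rightarrow l'-)$ nor the transparency of $l$ changes, so $m'(l- \rightarrow l'-) = m(l- \rightarrow l'-) > 0$ and the induction hypothesis yields opacity of $l$ in $t'$, hence in $t$. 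If $e = l'-$, the place loses a token, so $m'(l- \rightarrow l'-) = m(l- \rightarrow l'-) + 1 > 0$; since $l \neq l'$ the transparency of $l$ is unchanged, and the induction hypothesis applies again.

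The remaining case, $e = l+$, is the crux and the main obstacle, since here $l$ becomes transparent in $t$ and so the claim can only hold vacuously: I must show that $m(l- \rightarrow l'-) > 0$ is impossible. Because $l+$ leaves the place $(l- \rightarrow l'-)$ untouched, $m'(l- \rightarrow l'-) = m(l- \rightarrow l'-) > 0$. But for $l+$ to be enabled in $m'$, all of its input places must be positive, and one of these is exactly $(l'- \rightarrow l+)$, since $l'$ is a right neighbor of $l$; hence $m'(l'- \rightarrow l+) > 0$. Together with $m'(l- \rightarrow l'-) > 0$ this contradicts the loop invariant, which bounds the sum of the three places by $1$, so this case cannot arise and the induction is complete. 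The two delicate points are obtaining the loop invariant (identifying the correct cycle and its initial marking so that \cref{thm:loops} applies) and correctly recognizing $(l'- \rightarrow l+)$ as an input place of $l+$ from the definition of the rise-decoupled graph.
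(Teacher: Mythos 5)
Your proof is correct and follows essentially the same route as the paper: induction on the trace (equivalently, on the derivation of $\mg{\coqmath{rise_decoupled}}{t}{m}$), with the key input being the cycle-marking invariant $m(l+ \rightarrow l-) + m(l- \rightarrow l'-) + m(l'- \rightarrow l+) = 1$ obtained from \cref{thm:loops}. The paper's proof gives no further detail beyond citing that lemma and the induction, and your case analysis (in particular the contradiction in the $e = l+$ case via the enabledness of the input place $(l'- \rightarrow l+)$) correctly fills in what the automation handles.
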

\begin{proof}
  By induction on $t$. Using \cref{thm:loops} and some light automation (about 50 lines of Coq tactics that will be re-used for other proofs), this lemma is proved in fewer than 20 lines of Coq proof scripts.
\end{proof}

Next, we prove a lemma relating the number of occurrences of $l-$ in a trace for neighboring latches.
\begin{lemma} \label{lem:rd-num-events}
  Let $\mg{\coqmath{rise_decoupled}}{t}{m}$. If $l-$ is enabled in $m$, then for all left neighbors $l'$ of $l$,
  \begin{align*}
      \numevents*{l'-}{t} = 
      \begin{cases}
        \numevents*{l-}{t} &\text{if}~l~\text{is odd} \\
        1+\numevents*{l-}{t} &\text{if}~l~\text{is even}
      \end{cases}
  \end{align*}
\end{lemma}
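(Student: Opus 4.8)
The plan is to reduce everything to a single affine bookkeeping identity for individual places, which I will call the \emph{token-counting identity}: for any place $p$ in a marked graph $M$ running from transition $t_1$ to transition $t_2$, whenever $\mg{M}{t}{m}$ holds we have
\[
  m(p) = \coqmath{init_marking}(M)(p) + \numevents*{t_1}{t} - \numevents*{t_2}{t}.
\]
This is immediate by induction on the evaluation relation $\mg{M}{t}{m}$: in the base case $t$ is empty, $m = \coqmath{init_marking}(M)$, and both counts are zero; in the inductive step $t = \tcons{e}{t'}$ with $\mg{M}{t'}{m'}$ and $m = \fire{e}{M}{m'}$, the definition of $\fire{e}{M}{m'}$ changes $m(p)$ by $+1$, $-1$, or $0$ exactly when $e = t_1$, $e = t_2$, or neither, matching the change in $\numevents*{t_1}{t}$ and $\numevents*{t_2}{t}$. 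The enabledness side conditions carried by the relation play no role here.

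Applying this identity to the neighbor place $q = (l'- \rightarrow l-)$, which exists because $l'$ is a left neighbor of $l$, gives
\[
  \numevents*{l'-}{t} - \numevents*{l-}{t} = m(l'- \rightarrow l-) - \coqmath{init_marking}(\coqmath{rise_decoupled})(l'- \rightarrow l-).
\]
So it suffices to show (i) that $m(l'- \rightarrow l-) = 1$, and (ii) that the initial marking of this place is $1$ when $l$ is odd and $0$ when $l$ is even. For (i) I would combine two bounds. The hypothesis that $l-$ is enabled means every input place of $l-$ is positive; since $q$ is exactly such an input place, $m(l'- \rightarrow l-) \geq 1$. For the matching upper bound I would invoke \cref{thm:loops} on the local cycle $l'+ \rightarrow l'- \rightarrow l- \rightarrow l'+$: the places $(l'+ \rightarrow l'-)$, $(l'- \rightarrow l-)$, and $(l- \rightarrow l'+)$ form a cycle whose markings sum to their initial value, and unfolding $\coqmath{rise_decoupled}$ shows that value is $1$; as all three markings are nonnegative, $m(l'- \rightarrow l-) \leq 1$. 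Hence $m(l'- \rightarrow l-) = 1$ exactly.

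For (ii), recall that neighbors have opposite parity, so $l$ odd forces $l'$ even and $l$ even forces $l'$ odd. Unfolding the initial marking of $\coqmath{rise_decoupled}$ on the place $(l'- \rightarrow l-)$ yields $1$ in the first case and $0$ in the second, mirroring the convention that odd latches start transparent. Substituting into the displayed identity gives $\numevents*{l'-}{t} - \numevents*{l-}{t} = 0$ when $l$ is odd and $=1$ when $l$ is even, which is exactly the claimed case split.

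I expect the main obstacle to be conceptual rather than computational: the two bounds must be pinned together so that $m(l'- \rightarrow l-)$ is forced to be \emph{exactly} $1$ --- enabledness alone only gives $\geq 1$, and the loop conservation of \cref{thm:loops} is what supplies the upper bound. Once that is in place, the entire parity-dependent behavior of the statement is isolated in the single constant $\coqmath{init_marking}(\coqmath{rise_decoupled})(l'- \rightarrow l-)$, and the token-counting induction is routine. A secondary bookkeeping point is confirming that $(l'- \rightarrow l-)$ really is an input place of $l-$ and really lies on the stated cycle, both of which follow by unfolding the (redundant-arc-augmented) definition of the protocol.
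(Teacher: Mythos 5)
Your proof is correct, but it reaches the result by a genuinely different decomposition than the paper. The paper proves this lemma by strengthening the induction hypothesis into a three-way disjunction over which place of the local cycle through $(l'- \rightarrow l-)$ currently holds the token---the three cases being exhaustive precisely because \cref{thm:loops} forces their markings to sum to $1$---and then inducting on $t$, reading off the stated count relation from the case in which $l-$ is enabled. You instead factor everything through an exact single-place state equation, $m(p) = \coqmath{init_marking}(M)(p) + \numevents*{e_1}{t} - \numevents*{e_2}{t}$ for a place $p$ from $e_1$ to $e_2$, and then pin $m(l'- \rightarrow l-)$ to exactly $1$ by a pincer: enabledness of $l-$ gives the lower bound, and \cref{thm:loops} on the cycle $l'+ \rightarrow l'- \rightarrow l- \rightarrow l'+$ gives the upper bound. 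Both arguments use the same two ingredients (cycle conservation and enabledness), but yours packages them more modularly: the token-counting identity is a reusable marked-graph fact from which \cref{thm:loops} itself follows by telescoping around the cycle, and it isolates the entire parity dependence in the single constant $\coqmath{init_marking}(\coqmath{rise_decoupled})(l'- \rightarrow l-)$. Your claimed values for that constant ($1$ when $l$ is odd, $0$ when $l$ is even) cannot literally be ``unfolded'' from the paper, which omits the rise-decoupled definition, but they are the only values consistent with odd latches starting transparent and even latches starting opaque, and they match the paper's initial-marking conventions. One pedantic caveat: your identity as stated tacitly assumes $e_1 \neq e_2$, since the paper's $\coqmath{fire}$ definition is ambiguous on a self-loop place; this is harmless here because $l'$ and $l$ are distinct neighbors, so $l'- \neq l-$.
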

\begin{proof}
The induction hypothesis must be strengthened to account for three exhaustive cases: $m(l- \rightarrow l'-) > 0$;
     $m(l'- \rightarrow l+) > 0$; and
     $m(l+ \rightarrow l-) > 0$. These cases are exhaustive because of \cref{thm:loops}: the sum of these three values is exactly equal to $1$.
The proof then proceeds by induction on $t$.
\end{proof}

Finally, we can prove the main result, that \coq{rise_decoupled} satisfies flow equivalence.
\begin{Coq}
Theorem rise_decoupled_flow_equivalence : 
  flow_equivalence rise_decoupled c st0.
\end{Coq}
Recall that we have already fixed the circuit $c$ and initial state $st_0$ and that the definition of the \coq{rise_decoupled} marked graph depends on $c$.

Unfolding the definition of flow equivalence, we can write out the statement of the theorem in English:

\begin{theorem} \label{thm:rd}
  If $\async{c}{st_0}{t}{l}{v}$ such that $l$ is opaque in $t$ and $\mg{\coqmath{rise_decoupled}}{t}{m}$, then $v = \synceval{c}{st_0}{\numevents*{l-}{t}}(l)$.
\end{theorem}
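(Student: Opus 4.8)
The plan is to induct on the derivation of $\async{c}{st_0}{t}{l}{v}$ using the four-case strong induction principle that \cref{def:async-execution} supplies, threading the marked-graph hypothesis through each recursive call. Because the hypothesis fixes $l$ to be opaque in $t$, the transparent constructor cannot be the root of the derivation, so only three cases survive. Whenever the argument passes to a shorter trace $t'$, inverting $\mg{\coqmath{rise_decoupled}}{t}{m}$ produces a marking $m'$ with $\mg{\coqmath{rise_decoupled}}{t'}{m'}$, so the induction hypothesis is always available on the subtrace.

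The two routine cases are quick. For the empty trace, opacity forces $l$ to be even with $v = st_0(l)$, and since $\numevents*{l-}{\tempty} = 0$ we have $\synceval{c}{st_0}{0}(l) = st_0(l) = v$. For the opaque-extension case $t = \tcons{e}{t'}$ with $e \neq l-$, the event $e$ can be neither $l-$ nor $l+$ (an $l+$ would make $l$ transparent), so the transparency of $l$ is unchanged and $\numevents*{l-}{t} = \numevents*{l-}{t'}$; the induction hypothesis on $t'$ closes the case directly.

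The substantive case is the fall $t = \tcons{l-}{t'}$, where $v = \nextstate{c}{st}{l}$ and every left neighbor $l'$ satisfies $\async{c}{st_0}{t'}{l'}{st(l')}$. Write $i = \numevents*{l-}{t} = 1 + \numevents*{l-}{t'} \ge 1$. Since $l-$ just fired, it was enabled in $m'$, so each input place $l'- \rightarrow l-$ is positive. Two lemmas then align the neighbor values with the synchronous semantics. First, \cref{lem:rd-opacity}, applied with $l'$ as the latch and $l$ as its right neighbor using $m'(l'- \rightarrow l-) > 0$, shows each $l'$ is opaque in $t'$; this licenses the induction hypothesis on $l'$, yielding $st(l') = \synceval{c}{st_0}{\numevents*{l'-}{t'}}(l')$. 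Second, \cref{lem:rd-num-events}, using that $l-$ is enabled in $m'$, fixes the neighbor counts: $\numevents*{l'-}{t'} = \numevents*{l-}{t'} = i-1$ when $l$ is odd, and $\numevents*{l'-}{t'} = 1 + \numevents*{l-}{t'} = i$ when $l$ is even.

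These offsets are precisely the ones built into the synchronous semantics: an even $l$ at cycle $i$ reads its odd neighbors at cycle $i$, while an odd $l$ at cycle $i$ reads its even neighbors at cycle $i-1$. Substituting, $st(l') = \synceval{c}{st_0}{i}(l')$ for even $l$ and $st(l') = \synceval{c}{st_0}{i-1}(l')$ for odd $l$, so in both parities $\nextstate{c}{st}{l}$ is computed from exactly the neighbor values defining $\synceval{c}{st_0}{i}(l)$; hence $v = \synceval{c}{st_0}{i}(l)$. I expect this fall case to be the main obstacle: it is the only place where the clock-cycle index advances, and the argument succeeds only because \cref{lem:rd-opacity} justifies invoking the induction hypothesis on neighbors while \cref{lem:rd-num-events} supplies the parity-dependent $+1/+0$ offset matching the synchronous read pattern. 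Getting this alignment right, rather than any single calculation, is the crux.
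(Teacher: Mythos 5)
Your proposal is correct and follows essentially the same route as the paper's proof: induction on the asynchronous evaluation derivation, dismissing the transparent case, handling the empty-trace and non-$l-$ extension cases directly, and in the fall case using \cref{lem:rd-opacity} to license the induction hypothesis on the left neighbors and \cref{lem:rd-num-events} to supply the parity-dependent index offset. Your version just spells out the arithmetic alignment with the synchronous semantics more explicitly than the paper does.
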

\begin{proof}
  By induction on the asynchronous evaluation judgment $\async{c}{st_0}{t}{l}{v}$. The structure of that definition leads to an induction principle with four cases.
  \begin{enumerate}[wide,labelwidth=!,labelindent=0pt]
      \item The first case, where $l$ is transparent in $t$, is vacuous.
      \item If $t$ is the empty trace it suffices to consider only even latches, for which the result is immediate.
      \item If $t = \tcons{e}{t'}$ such that $e \neq l-$, the result follows immediately from the induction hypothesis.
      
      \item Finally, the only non-trivial case is when $t = \tcons{l-}{t'}$. Since $l-$ is enabled, its left neighbors are all opaque and so have already acquired their correct values.
      
      In this case, $v=\nextstate{c}{st}{l}$ where, for all left neighbors $l'$ of $l$, $\async{c}{st_0}{t'}{l'}{st(l')}$. Since $l-$ is enabled in $\mg{\coqmath{rise_decoupled}}{t'}{m'}$, $l'$ is opaque in $t'$ (\cref{lem:rd-opacity}). Thus the induction hypothesis applies: 
      \[ st(l') = \synceval{c}{st_0}{\numevents*{l'-}{t'}}(l'). \]
      
      Unfolding definitions, \[\synceval{c}{st_0}{\numevents*{l-}{t}}(l) = \nextstate{c}{st'}{l}\]
      where, for all left neighbors $l'$ of $l$,
      \begin{align*}
          st'(l') = 
          \begin{cases}
            \synceval{c}{st_0}{1+i}(l')
                &\text{if}~l~\text{is odd} \\
            \synceval{c}{st_0}{i}(l') 
                &\text{if}~l~\text{is even}
          \end{cases}
      \end{align*}
      and where $i=\numevents*{l-}{t'}$.
      The fact that $st(l')=st'(l')$ follows from  \cref{lem:rd-num-events}. \qedhere
  \end{enumerate}
\end{proof}

Note that we could have performed induction on the trace $t$ here, but that would not be sufficient for the fall-decoupled proof, which requires us to reason about both transparent and opaque latches.

\subsection{Fall-decoupled protocol}

The rise-decoupled protocol of the previous section allows the rises of clocks to interleave arbitrarily as long as the falls of those clocks obey the inductive invariant. In the fall-decoupled protocol (\cref{fig:rd}), the situation is reversed---a clock may fall either before or after its predecessors' clocks fall, as long as its rise occurs after its predecessors' clocks rise, as illustrated in \cref{fig:fd-ind-inv}. Under the zero-delay assumption, the model preserves the invariant that each latch will obtain its correct value as soon as it goes transparent, and that value will be stable until it goes opaque again.

\begin{figure}
    \centering
    \includegraphics[width=\columnwidth]{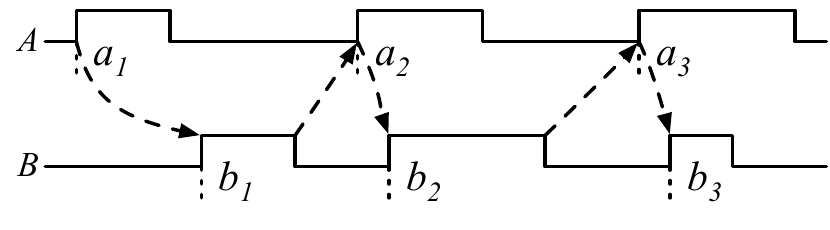}
    \caption{Timing diagram illustrating valid interleavings of the fall-decoupled protocol. Notice that the value $a_i$ is always available by the rise of $B$'s clock.}
    \label{fig:fd-ind-inv}
\end{figure}

For the proof of flow equivalence, we build on the same structure of lemmas as for the rise-decoupled protocols, proving variants of \cref{lem:rd-opacity,lem:rd-num-events} adapted to the fall-decoupled protocol. For example:

\begin{lemma} \label{lem:fd-num-events}
  Let $\mg{\coqmath{fall_decoupled}}{t}{m}$.
  If $l$ is opaque in $t$, then
  \begin{align*}
      \numevents*{l-}{t} = \begin{cases}
        1+\numevents*{l+}{t} &\text{if}~l~\text{is odd} \\
        \numevents*{l+}{t} &\text{if}~l~\text{is even.} \\
      \end{cases}
  \end{align*}
  If $l$ is transparent in $t$, then
  for all left neighbors $l'$ of $l$:
  \begin{align*}
      \numevents*{l+}{t} = \begin{cases}
        \numevents*{l'+}{t} &\text{if}~l~\text{is odd} \\
        1+\numevents*{l'+}{t} &\text{if}~l~\text{is even.} \\
      \end{cases}
  \end{align*}
\end{lemma}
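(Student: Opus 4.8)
The plan is to prove both halves of the lemma simultaneously by induction on the derivation of $\mg{\coqmath{fall_decoupled}}{t}{m}$ (equivalently, by induction on $t$), mirroring the strategy used for \cref{lem:rd-num-events}. The conceptual bridge from markings back to event counts is the standard marked-graph identity: since each firing increments the marking of a transition's outgoing places and decrements its incoming places, for any place $p$ from $e_1$ to $e_2$ we have $m(p) = \coqmath{init_marking}(\coqmath{fall_decoupled})(p) + \numevents*{e_1}{t} - \numevents*{e_2}{t}$, a fact that follows by a straightforward induction on $t$ from the definition of $\coqmath{fire}$, since the firings of $l{+}$ and $l{-}$ are counted precisely by $\numevents*{l+}{t}$ and $\numevents*{l-}{t}$. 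Consequently every count equation in the lemma is equivalent to pinning down one marking value, and \cref{thm:loops} supplies those values by fixing the total marking of each relevant cycle.

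For the opaque case I would use the two-place self-loop consisting of $l{+}\to l{-}$ and $l{-}\to l{+}$. By \cref{thm:loops} the sum of their markings equals the preserved loop total $1$, and the transparency of $l$ decides which place holds the single token: when $l$ is opaque the token sits in $l{-}\to l{+}$, so $m(l{+}\to l{-})=0$. Substituting into the marking identity for $p=(l{+}\to l{-})$ yields $\numevents*{l-}{t} = \coqmath{init_marking}(\coqmath{fall_decoupled})(l{+}\to l{-}) + \numevents*{l+}{t}$, and the parity split is exactly the initial self-loop marking --- $1$ for odd latches (which start transparent) and $0$ for even latches (which start opaque) --- reproducing the two stated subcases.

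The transparent case is the genuinely harder half, since it is a statement about each left neighbor $l'$ rather than about $l$ alone. Here I would work with the arc $l'{+}\to l{+}$ that encodes the defining fall-decoupled constraint that $l$ may rise only after its predecessor $l'$ has risen. The task reduces to determining $m(l'{+}\to l{+})$ when $l$ is transparent: combining the loop invariant of \cref{thm:loops} on the neighbor cycle with the transparency status of $l$ and $l'$ fixes this marking, after which the marking identity for $p=(l'{+}\to l{+})$ reads off $\numevents*{l+}{t}$ in terms of $\numevents*{l'+}{t}$, with the $+1$ for even latches again coming from the initial marking of that arc. Resolving the neighbor marking invokes the opaque-case conclusions for $l'$, which is precisely why the two halves must be carried together through a single induction.

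The main obstacle is therefore not any single computation but the \emph{strengthening} of the induction hypothesis and the bookkeeping it forces. Because the statement conflates the opaque and transparent cases and quantifies over all left neighbors, the invariant propagated through the induction must simultaneously record, for every latch and every neighbor pair, which place of the relevant cycle carries the token together with the matching count equation; one then checks that each $l{+}$ and each $l{-}$ firing preserves this combined invariant. As in the rise-decoupled proof, the delicate point is arranging the case split so that it is both exhaustive and mutually consistent --- exhaustiveness being guaranteed by \cref{thm:loops} --- after which every inductive step is a routine single-token marking update.
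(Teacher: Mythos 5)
Your proposal is correct and matches the paper's approach: the paper proves this lemma exactly as it proves \cref{lem:rd-num-events}, namely by induction on $t$ with an induction hypothesis strengthened to track which place of each relevant cycle carries the token, using \cref{thm:loops} for exhaustiveness. Your explicit marking-vs-event-count identity $m(p) = \coqmath{init_marking}(p) + \numevents*{e_1}{t} - \numevents*{e_2}{t}$ is just a clean packaging of the bookkeeping the paper leaves implicit, and your token-position analysis reproduces the stated parity offsets correctly.
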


In the rise-decoupled protocol, the proof of flow equivalence relies on the fact that, whenever the event $l-$ is enabled, all of $l$'s left neighbors are opaque, so are guaranteed to have their correct values. This is not the case in the fall-decoupled protocol, where $l$'s left neighbors might be either transparent or opaque. Therefore, it is necessary to strengthen the statement of flow equivalence to account for the values of both opaque and transparent latches.

\begin{theorem} \label{thm:fd-strong}
  If $\mg{\coqmath{fall_decoupled}}{t}{v}$
  and $\async{c}{st_0}{t}{l}{v}$, then $v = \synceval{c}{st_0}{i}(l)$, where
  \begin{align*}
      i = \begin{cases}
        1+\numevents*{l+}{t} &\text{if}~l~\text{is odd} \\
        \numevents*{l+}{t} &\text{if}~l~\text{is even}.
      \end{cases}
  \end{align*}
\end{theorem}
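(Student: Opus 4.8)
The plan is to prove the strengthened statement \cref{thm:fd-strong} by induction on the asynchronous evaluation judgment $\async{c}{st_0}{t}{l}{v}$, exactly as in \cref{thm:rd}, exploiting the four-case induction principle of \cref{def:async-execution}. The essential difference is that the transparent case, which was vacuous for the rise-decoupled protocol (where we reasoned only about opaque latches), is now a genuine case: the strengthened conclusion asserts the value of $l$ at index $i$ whether or not $l$ is opaque, so it must be proved for transparent latches too. Throughout, the workhorse is \cref{lem:fd-num-events}, whose transparent clause relates $\numevents*{l+}{t}$ to the rise-counts $\numevents*{l'+}{t}$ of the left neighbors $l'$, and whose opaque clause relates $\numevents*{l-}{t}$ to $\numevents*{l+}{t}$.

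Two of the four cases are immediate. If $t = \tempty$, then $l$ is even and opaque with $v = st_0(l)$; since $i = \numevents*{l+}{\tempty} = 0$, the synchronous value $\synceval{c}{st_0}{0}(l) = st_0(l)$ matches. If $t = \tcons{e}{t'}$ with $e \neq l-$ and $l$ opaque, then $e \neq l+$ as well (otherwise $l$ would be transparent in $t$), so $\numevents*{l+}{t} = \numevents*{l+}{t'}$, the index is unchanged, and the induction hypothesis closes the case. The substantive transparent case is where $v = \nextstate{c}{st}{l}$ with each left neighbor satisfying $\async{c}{st_0}{t}{l'}{st(l')}$. Unfolding $\synceval{c}{st_0}{i}(l)$ (noting $i > 0$: an odd latch has $i = 1 + \numevents*{l+}{t} \geq 1$, and a transparent even latch must have risen, so $\numevents*{l+}{t} \geq 1$) expresses it as $\nextstate{c}{st'}{l}$, where the left neighbors of an odd $l$ are read at time $i - 1 = \numevents*{l+}{t}$ and those of an even $l$ at time $i = \numevents*{l+}{t}$. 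The transparent clause of \cref{lem:fd-num-events} supplies exactly the identities $\numevents*{l'+}{t} = \numevents*{l+}{t}$ (odd $l$) and $1 + \numevents*{l'+}{t} = \numevents*{l+}{t}$ (even $l$), so the index at which the induction hypothesis delivers $st(l')$ coincides with the index demanded by $st'(l')$, and the two next-state applications agree.

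The main obstacle is the final case $t = \tcons{l-}{t'}$, where $v = \nextstate{c}{st}{l}$ and the left neighbors are evaluated in $t'$. Here those neighbors may be transparent or opaque---precisely the situation that forced the strengthening---so we cannot, as in the rise-decoupled proof, simply invoke opacity to reach their correct values. Instead I would first establish that $l$ itself is transparent in $t'$, using the fall-decoupled analogue of \cref{lem:rd-opacity}: since $t$ is compatible with the marked graph, $l-$ is enabled in the marking reached by $t'$, and in the fall-decoupled protocol an enabled $l-$ requires the place from $l+$ to $l-$ to be marked, so $l$ has most recently risen and is therefore transparent in $t'$. With $l$ transparent in $t'$, the transparent clause of \cref{lem:fd-num-events} applies to $l$ in $t'$ and yields the same index identities as in the transparent case, now relating $\numevents*{l+}{t'} = \numevents*{l+}{t}$ to the neighbors' rise-counts; the induction hypothesis applied to each $st(l')$ then lands at the correct synchronous index and the two next-state applications agree. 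Finally, to recover ordinary flow equivalence for an opaque latch $l$---the statement $v = \synceval{c}{st_0}{\numevents*{l-}{t}}(l)$---I would combine \cref{thm:fd-strong} with the opaque clause of \cref{lem:fd-num-events}, which gives $\numevents*{l-}{t} = i$ in both parities, so the two indices coincide.
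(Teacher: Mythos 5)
Your proof follows the paper's argument essentially verbatim: induction on the asynchronous evaluation judgment with the same four-case decomposition, using \cref{lem:fd-num-events} to align the synchronous indices of the left neighbors in both the transparent case and the $t = \tcons{l-}{t'}$ case. The one place you go beyond the paper---explicitly establishing that $l$ is transparent in $t'$ (via enabledness of $l-$ in the fall-decoupled marking) before invoking the transparent clause of \cref{lem:fd-num-events} in the final case---is a detail the paper leaves implicit, and your justification is the right one.
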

\begin{proof}
  By induction on the relation $\async{c}{st_0}{t}{l}{v}$.
  \begin{enumerate}[wide,labelwidth=!,labelindent=0pt]
  
  \item First, if $l$ is transparent in $t$, then $v$ is equal to $\nextstate{c}{st}{l}$ such that, for all left neighbors $l'$ of $l$, $\async{c}{st_0}{t}{l'}{st(l')}$.
  
  Notice that $i > 0$: for $l$ odd this follows immediately from the definition; for $l$ even there is at least one occurrence of $l+$ in $t$ since even latches are initially opaque. Thus,
  $
      \synceval{c}{st_0}{i}(l) = \nextstate{c}{st'}{l}
  $
  where $st' = \synceval{c}{st_0}{\numevents*{l+}{t}}$.
  
  It suffices to show that for all left neighbors $l'$, 
  $st(l')=st'(l')$.
  By induction, $st(l') = \synceval{c}{st_0}{i'}(l')$ where
  \begin{align*}
      i' = \begin{cases}
        1+\numevents*{l'+}{t} &\text{if}~l'~\text{is odd} \\
        \numevents*{l'+}{t} &\text{if}~l'~\text{is even.}
      \end{cases}
  \end{align*}
 By \cref{lem:fd-num-events} we know that $\numevents*{l+}{t}=i'$, which completes the proof.
  
  \item If $l$ is opaque in the empty trace, it must be odd, and the result is trivial.
  
  \item If $l$ is opaque in $t=\tcons{e}{t'}$ where $e \neq l-$, the result is straightforward from the induction hypothesis on $t'$.
  
  \item Finally, if $t=\tcons{l-}{t'}$, then $v = \nextstate{c}{st}{l}$ where $\async{c}{st_0}{t'}{l'}{st(l')}$ for all left neighbors $l'$ of $l$. Furthermore, $\synceval{c}{st_0}{i}(l)$ is equal to $\nextstate{c}{st'}{l}$ where $st'=\synceval{c}{st_0}{\numevents*{l+}{t'}}$. So it suffices to show that, for all left neighbors $l'$, $st(l')=st'(l')$. By the induction hypothesis $st(l') = \synceval{c}{st_0}{i'}(l')$, where
  \begin{align*}
      i' = \begin{cases}
        1+\numevents*{l'+}{t'} &\text{if}~l'~\text{is odd} \\
        \numevents*{l'+}{t'} &\text{if}~l'~\text{is even}.
      \end{cases}
  \end{align*}
  By \cref{lem:fd-num-events} we know $i' = \numevents*{l+}{t'}$, as required.
  \qedhere
  \end{enumerate}
\end{proof}

Finally, we must show that \cref{thm:fd-strong} implies flow equivalence.

\begin{corollary}
  Whenever $\async{c}{st_0}{t}{l}{v}$ such that $l$ is opaque in $t$ and $\mg{\coqmath{fall_decoupled}}{t}{m}$, then $v = \synceval{c}{st_0}{i}(l)$ where $i = \numevents*{l-}{t}$.
\end{corollary}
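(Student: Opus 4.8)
The plan is to obtain this corollary as an immediate consequence of \cref{thm:fd-strong} and \cref{lem:fd-num-events}, with no further induction required. First I would apply \cref{thm:fd-strong} to the hypotheses $\async{c}{st_0}{t}{l}{v}$ and $\mg{\coqmath{fall_decoupled}}{t}{m}$, which are exactly what that theorem demands. This yields $v = \synceval{c}{st_0}{i}(l)$, where $i = 1 + \numevents*{l+}{t}$ when $l$ is odd and $i = \numevents*{l+}{t}$ when $l$ is even. At this point $v$ is already pinned down as a particular synchronous value; the only discrepancy between this and the desired conclusion is that the index $i$ is phrased in terms of the rising events $l+$, whereas flow equivalence counts the falling events $l-$.

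To reconcile the two, I would invoke the opaque branch of \cref{lem:fd-num-events}, which applies precisely because $l$ is assumed opaque in $t$. That branch states $\numevents*{l-}{t} = 1 + \numevents*{l+}{t}$ for odd $l$ and $\numevents*{l-}{t} = \numevents*{l+}{t}$ for even $l$. Comparing parity by parity, these equalities coincide exactly with the definition of $i$ from \cref{thm:fd-strong}, so $i = \numevents*{l-}{t}$ regardless of the parity of $l$. Substituting this identity into $v = \synceval{c}{st_0}{i}(l)$ gives $v = \synceval{c}{st_0}{\numevents*{l-}{t}}(l)$, which is the statement of the corollary.

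The argument has essentially no obstacle of its own: the strengthened formulation of \cref{thm:fd-strong}, which tracks both transparent and opaque latches and indexes the synchronous value by rising rather than falling events, was engineered precisely so that this corollary collapses to a two-case arithmetic identity on event counts. The one point requiring care is threading the opacity hypothesis through so as to select the correct branch of \cref{lem:fd-num-events}; once the parity split is made, the conclusion follows by direct substitution, and nothing in this final step re-examines the trace structure.
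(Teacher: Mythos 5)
Your proposal is correct and matches the paper's own proof: both apply \cref{thm:fd-strong} directly and then reduce the corollary to the identity $\numevents*{l-}{t} = 1+\numevents*{l+}{t}$ for odd $l$ (respectively $\numevents*{l-}{t} = \numevents*{l+}{t}$ for even $l$), which is exactly the opaque branch of \cref{lem:fd-num-events}. No differences worth noting.
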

\begin{proof}
  It suffices to show that
  \begin{align*}
      \numevents*{l-}{t} = 
      \begin{cases}
        1+\numevents*{l+}{t} &\text{if}~l~\text{is odd} \\
        \numevents*{l+}{t} &\text{if}~l~\text{is even,}
      \end{cases}
  \end{align*}
  which follows from \cref{lem:fd-num-events}.
\end{proof}

\subsection{Less concurrent protocols}

\citeauthor{Cortadella2006} also discuss three other protocols that are even less concurrent than the rise-decoupled and fall-decoupled protocols, and claim they are flow equivalent since every trace they accept is also accepted by a known flow-equivalent protocol. Because they are all three subsumed by the rise- and fall-decoupled protocols, this is still true in our case. The formal proofs of this result depend only on the proof that one marked graph refines another. As an exercise, we proved this fact for the semi-decoupled protocol in our Coq repository.

\subsection{Counterexample, revisited}

Finally, we revisit the counterexample to desynchronization described in  \cref{sec:counter} in the formal setting, providing a concrete circuit and a trace compatible with desynchronization such that, after executing the trace, there is a latch whose value is not equal to $\synceval{c}{st_0}{\numevents*{l-}{t}}(l)$. The proof of flow equivalence fails in this case because neither of the inductive invariants described in the rise-decoupled or fall-decoupled protocols are satisfied---the desynchronization protocol allows latches to go opaque before all of their left neighbors have gone opaque, which can result in an entire value being dropped, as in \cref{fig:counterexample}. 

\begin{theorem} There exists a circuit $c$ for which the desynchronization protocol does not ensure flow equivalence.
\end{theorem}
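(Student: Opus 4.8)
The plan is to instantiate the informal counterexample of \cref{sec:counter} as a concrete circuit, trace, and failing latch, and then refute flow equivalence directly by exhibiting a witness. First I would fix the three-stage pipeline $c$: even latches $A$ and $C$, odd latch $B$, with even-odd neighbors $\{(A,B)\}$ and odd-even neighbors $\{(B,C)\}$, and an initial state $st_0$ assigning $a_0$ to $A$ and $c_0$ to $C$. The essential design decision is to choose the next-state functions $F_B$ and $F_C$ so that the synchronous values are genuinely distinct across cycles---in particular so that $\synceval{c}{st_0}{1}(C) = c_1$ and $\synceval{c}{st_0}{2}(C) = c_2$ with $c_1 \neq c_2$. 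Taking $F_B$ and $F_C$ to be, say, successor-like functions on the underlying numbers makes each $a_n,b_n,c_n$ a fresh value and guarantees the inequality we ultimately need; without this the dropped value would be invisible.

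Second I would write the trace $t$ from \cref{fig:counterexample} as an explicit list of rise/fall events, encoding the scenario in which $A$'s first transition is delayed maximally: the controllers drive $B$ and $C$ through two full pulses while $A$ remains in its initial configuration, ending the trace right after the second fall of $C$, so that $C$ is opaque in $t$ and $A+$ has not yet occurred. I would reconcile the ``all latches opaque'' convention of the figure with the formalization's convention (odd latches initially transparent), adjusting the leading events so the trace is well-formed in the sense of \cref{def:async-execution}.

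Third, I would discharge the marked-graph compatibility obligation $\exists m.\,\mg{\coqmath{desynchronization}}{t}{m}$ by exhibiting the firing sequence explicitly: starting from $\coqmath{init_marking}$ and firing the transitions of $t$ one at a time, checking at each step that the fired transition is enabled in the current marking. In Coq this is a finite computation discharged by construction and automation. Fourth, I would derive the asynchronous value of $C$ after $t$ from the relation of \cref{def:async-execution}: because $C$'s second fall happens before $A+$, the backward chase through $B$ reaches $A$ while it still holds $a_0$, so $B$ recomputes $F_B(a_0)=b_1$ and $\async{c}{st_0}{t}{C}{c_1}$. Finally, I would compute $\numevents*{C-}{t} = 2$, observe $\synceval{c}{st_0}{2}(C) = c_2$, and conclude $c_1 \neq c_2$, contradicting the definition of flow equivalence for $M = \coqmath{desynchronization}$ and thereby proving the existential.

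I expect the marked-graph compatibility step---presenting a concrete firing sequence for $t$ and verifying enabledness at every transition---to be the main bookkeeping obstacle, since it is where the delicate interleaving (two pulses of $C$ before any transition of $A$) must be shown to genuinely satisfy the desynchronization places and initial marking, rather than merely appearing plausible in the timing diagram. The conceptual crux, by contrast, is already isolated by \cref{sec:counter}: the desynchronization protocol permits $C$ to go opaque before its transitive predecessor $A$ has produced $a_1$, so $C$ re-latches $c_1$ in place of $c_2$.
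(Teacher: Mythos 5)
Your proposal is correct and takes essentially the same route as the paper: instantiate the \cref{sec:counter} pipeline as a concrete circuit, exhibit the trace ending at the second fall of $C$, verify desynchronization-compatibility by an explicit firing sequence, and show the asynchronously latched value $c_1$ differs from $\synceval{c}{st_0}{2}(C)=c_2$. The only substantive difference is that the paper, following its convention of modeling only closed circuits, adds environment latches $\SRC$ and $\SNK$ (whose events must then appear in the trace to enable $C$'s rises), whereas your open pipeline leaves $A$ with no left neighbors and hence a constant next-state function---still a legal instance of the \coq{circuit} record, and your choice of injective, successor-like stage functions keeps $c_1\neq c_2$, so the argument goes through.
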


\begin{proof}
\cref{lst:counterexample} shows a three-stage pipeline, similar to the one in \cref{sec:counter} but with two additional latches $\SRC$ and $\SNK$ representing the input and output environments respectively. In the initial state, all latches have value $X$, and with each pulse, the source environment $\SRC$ sends integers of increasing value to $A$. Each pipeline stage increments its input by one (denoted \coq{inc_value v}), 
and the output environment $\SNK$ consumes the output of $C$.
\begin{figure}
\begin{Coq}
Program Definition c : circuit even odd :=
  {| even_odd_neighbors := [ (A,SRC); (A,B); (C,SNK) ]
   ; odd_even_neighbors := [ (SRC,A); (B,C); (SNK,C) ]
   ; next_state_e := fun E => match E with
        | A => fun st => st (SRC)
        | C => fun st => inc_value (st B)
        end
   ; next_state_o := fun O => match O with
        | SRC => fun st => inc_value (st A)
        | B   => fun st => inc_value (st A)
        | SNK => fun _ => Num 0
        end |}.
\end{Coq}
    \caption{A circuit subject to the desynchronization counterexample.}
    \label{lst:counterexample}
\end{figure}

We take a prefix of the trace illustrated in \cref{fig:counterexample} up until the second fall of $C$'s clock. It is at this point that flow equivalence is violated---the second value latched by $C$ is still $\synceval{c}{st_0}{1}(C)$.
\begin{align*}
    t^c = [&\SNK-, C+, B-, C-, \SNK+, \SNK-, C+, B+, C-]
\end{align*}

It is easy to check that there exists a marking $m$ such that $\mg{\coqmath{desynchronization}}{t^c}{m}$, and that $\async{c}{st_0}{t^c}{C}{\synceval{c}{st_0}{1}(C)}$. To complete the proof that flow equivalence is violated, it suffices to check that $\synceval{c}{st_0}{1}(C) \neq \synceval{c}{st_0}{2}(C)$.
\qedhere

\end{proof}

\section{Summary and Future Work}
\label{sec:summary}

This paper makes three contributions: we identify a mistake in a hand-written proof, carefully formalize the relevant definitions, and adapt the proof to two variants of the original protocol. This research highlights the benefits of formal theorem proving to verify that a design satisfies desirable properties.


Future work will apply this framework to verify the correctness of gate-level implementations of the controllers by checking that they satisfy the rise- or fall-decoupled protocols.  In addition, we plan to extend our Coq framework to account for more abstract design specifications such as FF-based synchronous designs and CSP-like specifications, as well as more realistic delay models. A long-term goal is to develop a comprehensive assurance framework for desynchronization-based design flows, which would account for the correctness of retimed datapaths and liveness. These extensions may require bridging disparate formal frameworks, including static timing analysis, equivalence, and model checking.

\section*{Acknowledgment}
\footnotesize{
Thanks to Marly Roncken, Ivan Sutherland, Tynan McAuley, Flemming Anderson, and the anonymous reviewers for their helpful feedback.
This material is based upon work supported by the Defense Advanced Research Projects Agency (DARPA) under Contract No. HR0011-19-C-0070. The views, opinions, and/or findings expressed are those of the authors and should not be interpreted as representing the official views or policies of the Department of Defense or the U.S. Government. DARPA Distribution Statement ``A'' (Approved for Public Release, Distribution Unlimited).}


\bibliographystyle{IEEEtranN}
\footnotesize
\bibliography{references}
\normalsize

\end{document}